\documentclass{article}%
\usepackage{amsfonts}
\usepackage{amsmath}
\usepackage{amssymb}
\usepackage{graphicx}%
\setcounter{MaxMatrixCols}{30}
\providecommand{\U}[1]{\protect\rule{.1in}{.1in}}
\newtheorem{theorem}{Theorem}

\newenvironment{proof}[1][Proof]{\noindent\textbf{#1.} }{\ \rule{0.5em}{0.5em}}
\begin{document}

\title{Revisiting Simpson's Paradox: \\a statistical misspecification perspective}
\author{Aris Spanos\\Department of Economics, \\Virginia Tech, Blacksburg,\\VA 24061, USA}
\date{May 2016}
\maketitle

\begin{abstract}
The primary objective of this paper is to revisit Simpson's paradox using a
statistical misspecification perspective. It is argued that the reversal of
statistical associations is sometimes spurious, stemming from invalid
probabilistic assumptions imposed on the data. The concept of statistical
misspecification is used to formalize the vague term `spurious results' as
`statistically untrustworthy' inference results. This perspective sheds new
light on the paradox by distingusing between statistically trustworthy vs.
untrustworthy association reversals. It turns out that in both cases there is
nothing counterintuitive to explain or account for. This perspective is also
used to revisit the causal `resolution' of the paradox in an attempt to
delineate the modeling and inference issues raised by the statistical
misspecification perspective. The main arguments are illustrated using both
actual and hypothetical data from the literature, including Yule's
"nonsense-correlations" and the Berkeley admissions study.

\medskip KEYWORDS: Association reversal; Spurious correlation; Statistical
misspecification; Statistical vs. substantive adequacy; Misspecification
testing; Untrustworthy evidence; Causality; Confounding.

\end{abstract}

\newpage

\section{Introduction\vspace*{-0.1in}}

True to "Stigler's Law of Eponymy" (Stigler, 1980), Simpson's paradox has a
long history in statistics going back to Yule's (1903) `spurious' association,
but it is currently credited to Simpson (1951) for reframing it as a
`paradox'; see Blyth (1972).

The paradox seems to have a number of alternative conceptions, and thus, it is
often described interchangeably as a counter-intuitive statistical result
pertaining to:

(a) Statistical associations that reverse themselves, such as "a marginal
association can have a different direction from each conditional association"
(Agresti, 2013).

(b) Either the magnitude or the direction of an association between two
variables is influenced by a third variable, such as "the association between
a pair of variables $\left(  X,Y\right)  $ reverses sign upon conditioning on
a third variable, $Z$.\textquotedblright\ (Pearl, 2014).

(c) Apparent statistical associations that after closer scrutiny of the data
are rendered `spurious' (Yule, 1903).

The recent discussions in statistics have focused on adopting one of the
perspectives (a)-(c), and using actual or hypothetical data to either explain
away the paradox or criticize other proposed `solutions'. The current
dominating view revolves around perspective (b) that differs from (a) in so
far as it emphasizes the causal dimension of conditioning on a confounder; see
Pearl (2009), Spirtes et al. (2000). Armistead (2014) put forward a dissenting
view by arguing that perspective (b) is rather narrow to explain the different
facets of the paradox:

\textsf{\textquotedblleft Simpson's Paradox, like all paradoxes, can be
defined as an apparent contradiction that may contain more than one
truth.\textquotedblright\ (p. 6)}

A strong case can be made that Simpson's paradox has different dimensions that
are often conflated or ignored in the literature. As argued by Wasserman (2004):

\textsf{\textquotedblleft Simpson's paradox is a puzzling phenomenon that is
discussed in most statistics texts. Unfortunately, most explanations are
confusing (and in some cases incorrect).\textquotedblright\ (p. 259)}

The primary aim of this paper is to shed light on the different conceptions of
the paradox by bringing out the similarities and differences between
perspectives (a)-(c). The key is provided by Yule's idea of `spuriousness' in
(c). Beginning with Yule (1903), the problem of `fictitious' associations and
`spurious' correlations was a recurring theme in Yules papers that culminated
in Yule (1926) on "nonsense-correlations". Although he shed some light on the
issues involved, he did not succeed in establishing a direct link between
spurious associations and invalid probabilistic assumptions for reasons to be
discussed in the sequel. The notion of statistical misspecification can be
used to formalize the term `spurious' as `statistically untrustworthy'
results, stemming from unreliable inference procedures. This enables one to
delineate between two distinct cases of association reversal:

\hspace*{-0.25in}Case 1. The reversal is statistically trustworthy due to
statistical adequacy.

\hspace*{-0.25in}Case 2. The reversal is statistically untrustworthy due to
statistical misspecification.

It turns out that the statistical misspecification perspective suggests that
in both cases there is nothing counterintuitive to explain.

In section 2, we discuss the case where the reversal is statistically
trustworthy due to the fact that the statistical models involved are
\textit{statistically adequate}: the invoked probabilistic assumptions are
valid for the particular data. When this is not the case, the inference
results are likely to be statistically untrustworthy (spurious). This is
discussed in section 3 using two empirical examples that bring out the
distinction between statistical and substantive misspecification. The
statistical misspecification argument is illustrated further in section 4
using several widely discussed examples of the paradox. In section 5, we
revisit the causal `resolution' of the paradox in an attempt to delineate the
modeling and inference issues raised by the statistical misspecification
perspective. \vspace*{-0.15in}

\section{Marginal vs. conditional associations\vspace*{-0.1in}}

Consider the case of a Linear Regression (LR) model:\vspace*{-0.1in}%
\begin{equation}%
\begin{array}
[c]{c}%
y_{t}\mathit{=}\beta_{0}+\beta_{1}x_{1t}+\beta_{2}x_{2t}\mathit{+}%
u_{t},\medskip\\
(u_{t}\mathit{\mid}X_{1t}\mathit{=}x_{1t},X_{2t}\mathit{=}x_{2t}%
)\mathit{\backsim}\text{\textsf{NIID}}(0,\sigma_{u}^{2}),\ t\mathit{\in
}\mathbb{N},
\end{array}
\vspace*{-0.07in}\label{rm1}%
\end{equation}
where `NIID' stands for `Normal, Independent and Identically Distributed'. It
is often insufficiently appreciated that the error assumptions imply a
particular statistical parameterization for the unknown parameters
$\mathbf{\theta}\mathit{:=}\left(  \beta_{0},\beta_{1},\beta_{2},\sigma
^{2}\right)  $ in terms of the moments of the observable process
$\{\mathbf{Z}_{t}\mathit{:=}(y_{t}\mathit{,}X_{1t},X_{2t}),\ t\mathit{\in
}\mathbb{N}\}$ underlying data $\mathbf{Z}_{0}$ (see Appendix). Alternatively,
one can derive the parameterization directly using the joint distribution of
the observable random variables involved:\vspace*{-0.1in}%
\begin{equation}%
\begin{array}
[c]{c}%
\left(
\begin{array}
[c]{c}%
y_{t}\\
X_{1t}\\
X_{2t}%
\end{array}
\right)  \backsim\text{ \textsf{NIID}}\left(  \left(
\begin{array}
[c]{c}%
\mu_{1}\\
\mu_{2}\\
\mu_{3}%
\end{array}
\right)  ,\left(
\begin{array}
[c]{ccc}%
\sigma_{11} & \sigma_{12} & \sigma_{13}\\
\sigma_{12} & \sigma_{22} & \sigma_{23}\\
\sigma_{13} & \sigma_{23} & \sigma_{33}%
\end{array}
\right)  \right)
\end{array}
\vspace*{-0.1in}\label{j}%
\end{equation}
In this case, the regression and skedastic functions take the form:\vspace
*{-0.1in}%
\[%
\begin{array}
[c]{c}%
E(y_{t}\mathit{\mid}X_{1t}\mathit{=}x_{1t},X_{2t}\mathit{=}x_{2t}%
)\mathit{=}\beta_{0}\mathit{+}\beta_{1}x_{1t}\mathit{+}\beta_{2}%
x_{2t},\ Var(y_{t}\mathit{\mid}X_{1t}\mathit{=}x_{1t},X_{2t}\mathit{=}%
x_{2t})\mathit{=}\sigma^{2},
\end{array}
\vspace*{-0.1in}%
\]
where the parameterizations of $\mathbf{\theta}\mathit{:=}\left(  \beta
_{0},\beta_{1},\beta_{2},\sigma^{2}\right)  $ are (table 1):%
\begin{equation}%
\begin{array}
[c]{c}%
\beta_{0}\mathit{=}\mu_{1}\mathit{-}\beta_{1}\mu_{2}\mathit{-}\beta_{2}\mu
_{3},\ \beta_{1}\mathit{=}\frac{(\sigma_{12}\sigma_{33}-\sigma_{13}\sigma
_{23})}{(\sigma_{22}\sigma_{33}-\sigma_{23}^{2})},\ \beta_{2}\mathit{=}%
\frac{(\sigma_{13}\sigma_{22}-\sigma_{12}\sigma_{23})}{(\sigma_{22}\sigma
_{33}-\sigma_{23}^{2})}%
\end{array}
\vspace*{-0.1in}\label{b}%
\end{equation}%
\begin{equation}%
\begin{array}
[c]{cl}%
\sigma_{u}^{2} & \mathit{=}\sigma_{11}\mathit{-}\sigma_{12}\left(
\frac{\sigma_{12}\sigma_{33}-\sigma_{13}\sigma_{23}}{\sigma_{22}\sigma
_{33}-\sigma_{23}^{2}}\right)  \mathit{-}\sigma_{13}\left(  \frac{\sigma
_{13}\sigma_{22}-\sigma_{12}\sigma_{23}}{\sigma_{22}\sigma_{33}-\sigma
_{23}^{2}}\right)  \mathit{=}\sigma_{11}\mathit{-}\sigma_{12}\beta
_{1}\mathit{-}\sigma_{13}\beta_{2}%
\end{array}
\label{s}%
\end{equation}
These results offer the key to elucidating perspectives (a)-(b) on Simpson's paradox.

\textbf{Perspective (a) on Simpson's paradox}. The correlation between $y_{t}
$ and $X_{1t}\ $\newline$(\rho_{12}$\textit{$=$}$\frac{\sigma_{12}}%
{\sqrt{\sigma_{11}\sigma_{22}}}),$ is positive ($\rho_{12}\mathit{>}0$), but
the coefficient $\beta_{1}$ in (\ref{rm1}) is negative ($\beta_{1}\mathit{<}%
0$).\medskip

\textbf{Is this reversal of association possible, and under what
circumstances?}

In light of the parameterization of $\beta_{1}$ in (\ref{b}), its numerator is
negative when:$\vspace*{-0.07in}$%
\[%
\begin{array}
[c]{c}%
\left[  (\sigma_{12}\sigma_{33}\mathit{-}\sigma_{13}\sigma_{23})<0\right]
\rightarrow\left[  \frac{\sigma_{13}\sigma_{23}}{\sigma_{33}}>\sigma
_{12}\right]
\end{array}
\vspace*{-0.07in}%
\]
Multiplying both terms in the last expression by $1/\sqrt{\sigma_{11}%
\sigma_{22}},$ yields:$\vspace*{-0.07in}$
\[%
\begin{array}
[c]{c}%
\frac{\sigma_{13}\sigma_{23}}{\sigma_{33}\sqrt{\sigma_{11}\sigma_{22}}%
}\mathit{=}\rho_{13}\rho_{23}>\rho_{12}\mathit{=}\frac{\sigma_{12}}%
{\sqrt{\sigma_{11}\sigma_{22}}},
\end{array}
\vspace*{-0.05in}%
\]
where $Corr(Y_{t},X_{2t})\mathit{:=}\rho_{13}$\textit{$=$}$\frac{\sigma_{13}%
}{\sqrt{\sigma_{33}\sigma_{11}}}$ and $Corr(X_{1t},X_{2t})\mathit{:=}\rho
_{23}$\textit{$=$}$\frac{\sigma_{23}}{\sqrt{\sigma_{22}\sigma_{33}}}$.\medskip

Hence, $\rho_{12}\mathit{>}0$ and $\beta_{1}\mathit{<}0$ occur when the
following conditions hold:\medskip

(i) the correlation coefficients $\rho_{13}$ and $\rho_{23}$ have the
\textit{same sign},

(ii) the product of $\rho_{13}$ and $\rho_{23}$ is greater than $\rho_{12} $,
i.e.\ $\rho_{13}\mathit{\cdot}\rho_{23}\mathit{>}\rho_{12},$ and

(iii) the determinant of the correlation matrix of $\mathbf{Z}_{t}$ is
positive:$\vspace*{-0.07in}$%
\[%
\begin{array}
[c]{c}%
Corr(\mathbf{Z}_{t})\mathit{=}1\mathit{-}\rho_{12}^{2}\mathit{-}\rho_{13}%
^{2}\mathit{-}\rho_{23}^{2}\mathit{+}2\rho_{12}\rho_{13}\rho_{23}\mathit{>}0.
\end{array}
\vspace*{-0.07in}%
\]
Condition (iii) ensures that $f(y_{t},x_{1t},x_{2t};\mathbf{\phi})$ in
(\ref{j}) is proper, giving rise to a well-defined conditional distribution
$f(y_{t}\mathit{\mid}x_{1t},x_{2t};\mathbf{\varphi})$; see Spanos and McGuirk
(2002).\vspace*{-0.15in}

\subsection{\textbf{Example 1}. \textbf{Correlations vs. partial
correlations}\vspace*{-0.1in}}

Assuming $\sigma_{11}\mathit{=}\sigma_{22}\mathit{=}\sigma_{33}\mathit{=}1,$
without any loss of generality, let the relevant correlations be: $\left(
\rho_{12},\rho_{13},\rho_{23}\right)  \mathit{=}\left(  .5,\pm.7,\pm.8\right)
,$ which satisfy (i)-(iii) above.\medskip

(a) For values $\left(  \rho_{13},\rho_{23}\right)  \mathit{=}\left(
.7,.8\right)  $: $\beta_{1}\mathit{=}-.167,\ \beta_{2}\mathit{=}%
.833,\ \sigma^{2}\mathit{=}.5$\medskip

(b) For values $\left(  \rho_{13},\rho_{23}\right)  \mathit{=}\left(
-.7,-.8\right)  $: $\beta_{1}\mathit{=}-.167,\ \beta_{2}\mathit{=}%
-.833,\ \sigma^{2}\mathit{=}.5$\medskip

Note that the sign of $\beta_{2}$ reflects the common sign of $\left(
\rho_{13},\rho_{23}\right)  .$ In light of these results, it is clear that
there is nothing paradoxical, or surprising, about the reversal of sign
between the simple correlation $\rho_{12}\mathit{>}0$ [stemming from the joint
distribution $f(y_{t}\mathit{,}x_{1t};\mathbf{\varphi}_{1})$], and the
regression coefficient $\beta_{1}\mathit{<}0$ [stemming from the conditional
distribution $f(y_{t}\mathit{\mid}x_{1t},x_{2t};\mathbf{\varphi}_{2})$]. This
reversal is due to the conditions (i)-(iii) above, which are easily testable
in practice; see Spanos (2006b).

It is well-known that there is a direct connection between $\rho_{12}$ and the
regression coefficient of $x_{1t}$ in the context of the simple linear
regression:\vspace*{-0.08in}%
\begin{equation}%
\begin{array}
[c]{c}%
y_{t}\mathit{=}\alpha_{0}\mathit{+}\alpha_{1}x_{1t}+\varepsilon_{t},\medskip\\
(\varepsilon_{t}\mathit{\mid}X_{1t}\mathit{=}x_{1t})\mathit{\backsim
}\text{\textsf{NIID}}(0,\sigma_{\varepsilon}^{2}),\ t\mathit{\in}\mathbb{N},
\end{array}
\vspace*{-0.08in}\label{rm2}%
\end{equation}
whose implicit statistical parameterization of $\mathbf{\phi}\mathit{:=}%
\left(  \alpha_{0},\alpha_{1},\sigma_{1}^{2}\right)  $ is:\vspace*{-0.08in}%
\[%
\begin{array}
[c]{c}%
\alpha_{0}\mathit{=}\mu_{1}\mathit{-}\alpha_{1}\mu_{2},\ \alpha_{1}%
\mathit{=}\frac{\sigma_{12}}{\sigma_{22}},\ \sigma_{\varepsilon}^{2}%
\mathit{=}\sigma_{11}\mathit{-}\frac{\sigma_{12}^{2}}{\sigma_{22}}.
\end{array}
\vspace*{-0.08in}%
\]
This is because $\rho_{12}$ is a scaled $(\sqrt{\sigma_{22}}/\sqrt{\sigma
_{11}})\mathit{>}0)$ reparameterization of $\alpha_{1}$:\vspace*{-0.08in}%
\begin{equation}%
\begin{array}
[c]{c}%
\rho_{12}\mathit{=}\frac{\sqrt{\sigma_{22}}}{\sqrt{\sigma_{12}}}\alpha_{1}.
\end{array}
\vspace*{-0.12in}\label{cr}%
\end{equation}
In the above numerical example, $\alpha_{1}\mathit{=}.5$ and\ $\beta
_{1}\mathit{=}-.167,$ confirming the sign reversal. This implies that one can
consider the question of association reversal by comparing the inference
results in (\ref{rm1}) and (\ref{rm2}).

In conclusion, it is very important to emphasize that in the above example,
both LR models, (\ref{rm1}) and (\ref{rm2}), are assumed to be statistically
adequate: their probabilistic assumptions are valid. In the case of real data
on $\left(  y_{t}\mathit{,}x_{1t},x_{2t}\right)  ,\ t\mathit{=}1,2,...,n$, one
needs to establish the statistical adequacy of both models using comprehensive
misspecification testing. What are the probabilistic assumptions that need to
hold for data $\mathbf{Z}_{0}$?\vspace*{-0.15in}

\section{Spurious (statistically untrustworthy) results\vspace*{-0.1in}}

In this section we bring out more explicitly the probabilistic assumptions
comprising the Linear Regression (LR) model with a view to illustrate the role
of statistical misspecification in shedding light on the various aspects of
Simpson's paradox.\vspace*{-0.15in}

\subsection{The statistical misspecification perspective\vspace*{-0.08in}}

Traditionally, the probabilistic assumptions underlying the Linear Regression
(LR) model are specified in terms of the error term; see Appendix. It turns
out, however, that such specifications are often incomplete and sometimes
include non-testable assumptions. Table 1 specifies the LR, generically
defined by: in terms of the Statistical Generating Mechanism (GM) and
assumptions [1]-[5] that constitute a complete, internally consistent and
testable set of assumptions in terms of the observable process $\{\left(
y_{t}\mathit{\mid}\mathbf{X}_{t}\mathit{=}\mathbf{x}_{t}\right)
,\ t\mathbf{\mathit{\in}}\mathbb{N}\}$ underlying the data\ $\mathbf{Z}%
_{0}\mathit{:=}\{(y_{t},\mathbf{x}_{t}),\ t\mathit{=}1,2,...,n\}$. This
provides a purely probabilistic construal for the notion of a statistical
model, viewed as a particular parameterization of the process $\{\left(
y_{t}\mathit{\mid}\mathbf{X}_{t}\mathit{=}\mathbf{x}_{t}\right)
,\ t\mathbf{\mathit{\in}}\mathbb{N}\}$. Intuitively, the statistical model
comprises the totality of probabilistic assumptions one imposes on the process
$\{\left(  y_{t}\mathit{\mid}\mathbf{X}_{t}\mathit{=}\mathbf{x}_{t}\right)
,\ t\mathbf{\mathit{\in}}\mathbb{N}\}$ with a view to render data
$\mathbf{Z}_{0}$ a `typical' realization thereof. The `typicality' is testable
using thorough misspecification testing; see Spanos (2006a).$\ $%
\[
\hspace*{-0.25in}%
\begin{tabular}
[c]{l}\hline%
\begin{tabular}
[c]{l}%
\textbf{Table 1: Linear Regression Model}%
\end{tabular}
\\\hline\hline
$\overset{\qquad}{%
\begin{tabular}
[c]{ll}%
Statistical GM: & $y_{t}\mathit{=}\beta_{0}+\mathbf{\beta}_{1}^{\top
}\mathbf{x}_{t}+u_{t},\ t\mathbf{\mathit{\in}}\mathbb{N}$.
\end{tabular}
}$\\
$\underset{\qquad}{\left.
\begin{tabular}
[c]{lll}%
\lbrack1] & Normality: & $\left(  y_{t}\mathbf{\mid X}_{t}\mathit{=}%
\mathbf{x}_{t}\right)  \backsim\mathsf{N}(.,.),$\\
\lbrack2] & Linearity: & $E\left(  y_{t}\mathbf{\mid X}_{t}\mathit{=}%
\mathbf{x}_{t}\right)  \mathit{=}\beta_{0}+\mathbf{\beta}_{1}^{\top}%
\mathbf{x}_{t}, $\\
\lbrack3] & Homoskedasticity: & $Var\left(  y_{t}\mathbf{\mid X}_{t}%
\mathit{=}\mathbf{x}_{t}\right)  \mathit{=}\sigma^{2},$\\
\lbrack4] & Independence: & $\{\left(  y_{t}\mathbf{\mid X}_{t}\mathit{=}%
\mathbf{x}_{t}\right)  ,\ t\mathbf{\mathit{\in}}\mathbb{N}\}$ indep.
process$,$\\
\lbrack5] & t-invariance: & $\left(  \beta_{0},\mathbf{\beta}_{1},\sigma
^{2}\right)  $ are \textit{not} changing with $t,$\\
\multicolumn{3}{l}{$\ \ \ \ \ \ \beta_{0}\mathit{=}E(y_{t})\mathit{-}%
\mathbf{\beta}_{1}^{\top}E(\mathbf{X}_{t}),\ \mathbf{\beta}_{1}\mathit{=}%
[Cov(\mathbf{X}_{t})]^{-1}Cov(\mathbf{X}_{t},y_{t}),\ $}\\
& \multicolumn{2}{l}{$\sigma^{2}\mathit{=}Var(y_{t})\mathit{-}Cov(\mathbf{X}%
_{t},y_{t})^{\top}[Cov(\mathbf{X}_{t})]^{-1}Cov(\mathbf{X}_{t},y_{t})$}%
\end{tabular}
\right\}  \ t\mathbf{\mathit{\in}}\mathbb{N}.}$\\\hline
\end{tabular}
\]

\textbf{Statistical adequacy}. An estimated LR model is said to be
\textit{statistically adequate} when all assumptions [1]-[5] are valid for
data $\mathbf{Z}_{0}.$ In practice, statistical adequacy can be appraised
using comprehensive misspecification testing; see Spanos (1999, 2015). The
importance of establishing statistical adequacy stems from the fact that it
secures the statistical reliability of inference based on such a model. That
is, the inference propositions associated with the LR model, including the
optimal properties of the MLE estimators and the relevant error probabilities
of the t and F tests, are reliable in the sense that their actual sampling
distributions approximate closely the theoretical ones derived by invoking the
validity of assumptions [1]-[5].

\textbf{Unreliability of inference}. When any subset of the assumptions
[1]-[5] are invalid, the reliability of inference of such procedures is called
into question. Statistical misspecifications are likely to give rise to
inconsistent estimators as well as induce sizeable discrepancies between the
nominal (assumed) error probabilities and the actual ones in testing. For
instance, when any of the assumptions [2], [4]-[5] are invalid, the OLS
estimators of $\left(  \beta_{0},\beta_{1}\right)  $ are likely to be
inconsistent, and the nominal error probabilities associated with the
significance t-tests for the coefficients $\left(  \beta_{0},\beta_{1}\right)
$ are likely to have significant discrepancies from the actual error
probabilities; see Spanos and McGuirk (2001), Spanos (2010). Applying a $.05$
significance level t-test when the actual type I error is closer to $.8$ is
likely to give rise to unreliable inferences.

It is important to emphasize that for assumptions [4] and [5] to be testable,
one needs to select an ordering of interest for data $\mathbf{Z}_{0}$. In the
case of time-series data, the ordering of interest is invariably `time', which
is an interval scale variable. For cross-section data, however, there are
often several orderings of interest, depending on the individual unit being
observed, and the modeler needs to think about such potential orderings as
they relate to [4]-[5]. Potential orderings for cross-section can vary from
gender (nominal scale), to age (ratio scale), etc.\vspace*{-0.15in}

\subsection{Statistical vs. substantive misspecification\vspace*{-0.08in}}

Let us return to example 1, where the problem of association reversal can be
viewed in the context of comparing the regression coefficients of $x_{1t},$
$\alpha_{1}$ and $\beta_{1},$ in the context of two Linear Regression
models:$\vspace*{-0.08in}$%
\[%
\begin{tabular}
[c]{ll}%
Model 1: & $y_{t}\mathit{=}\beta_{0}+\beta_{1}x_{1t}+\beta_{2}x_{2t}%
\mathit{+}u_{t},\ (u_{t}\mathit{\mid}\mathbf{X}_{t}\mathit{=}\mathbf{x}%
_{t})\mathit{\backsim}\text{\textsf{NIID}}(0,\sigma_{u}^{2}),\ t\mathit{\in
}\mathbb{N},\medskip$\\
Model 2: & $y_{t}\mathit{=}\alpha_{0}\mathit{+}\alpha_{1}x_{1t}+\varepsilon
_{t},\ (\varepsilon_{t}\mathit{\mid}X_{1t}\mathit{=}x_{1t})\mathit{\backsim
}\text{\textsf{NIID}}(0,\sigma_{\varepsilon}^{2}),\ t\mathit{\in}\mathbb{N},$%
\end{tabular}
\vspace*{-0.08in}%
\]
where $\mathbf{x}_{t}\mathit{:=}(x_{1t},x_{2t})^{\top}.$ In the previous
section, it was argued that when both models are statistically adequate, it
could happen that the estimated coefficients $\alpha_{1}$ and $\beta_{1}$
differ in both sign and magnitude. There is, however, a sizeable literature on
`omitted variables' which would call model 2 misspecified when $\beta_{2} $
turns out to be statistically significant; see Greene (2011). In what sense is
model 2 misspecified if its assumptions [1]-[5] (table 1) are valid?
Similarly, the literature on causal modeling would test the significance of
the covariances $\sigma_{13}$ and $\sigma_{23}$ as they relate to the
regression coefficients, to decide whether $x_{2t}$ is a confounder; see Pearl
(2011). How does this relate to the statistical misspecification perspective?

A closer look at the literature suggests that statistical misspecification is
often conflated with substantive misspecification, using confusing and
confused claims, such as the OLS estimator of $\alpha_{1}$ in model 2 is an
inconsistent estimator of $\beta_{1}$ in model 1 (Greene, 2011), ignoring the
fact that the two coefficients represent very different
parameterizations:$\ \alpha_{1}\mathit{=}\frac{\sigma_{12}}{\sigma_{22}%
},\ \beta_{1}\mathit{=}\frac{(\sigma_{12}\sigma_{33}-\sigma_{13}\sigma_{23}%
)}{(\sigma_{22}\sigma_{33}-\sigma_{23}^{2})}.$

To make any sense of such comparisons, one needs to distinguish between
\textit{statistical} and \textit{substantive adequacy} because the former
requires only that assumptions [1]-[5] are valid for $\mathbf{Z}_{0}$.
Assumptions [1]-[5] have nothing to do with: the LR model includes all
`substantively' relevant variables. The latter is a substantive assumption
that pertains to the explanatory potential of the estimated model as it
relates to the phenomenon of interest. Substantive inadequacy can arise from
missing but relevant variables, false causal claims, etc. The crucial
importance of this distinction stems from the fact that when models 1-2 are
statistically misspecified, both the test for an omitted variable, as well as
the tests for deciding whether $x_{2t}$ is a confounder, or a mediator, are
likely to give rise to untrustworthy results; see Spanos (2006b).

This distinction is also important when the term `spurious' is employed
without being qualified to differentiate between \textit{statistically }and
\textit{substantively spurious} inference results. Indeed, the term `spurious
correlation' is often used to describe the case where the statistical
significance of a correlation coefficient is taken at face value, and an
attempt is made to explain it away using substantive arguments; see Sober
(2001). More often than not, however, one can show that the statistical
significance is more apparent than real, because it is just an untrustworthy
result stemming from a statistically misspecified model.\vspace*{-0.15in}

\subsection{Example 2. Yule's `nonsense-correlations'\vspace*{-0.08in}}

The problem of `spurious' associations, first noted by Pearson (1896), was
high up in Yule's agenda during the first quarter of the 20th century,
returning to it on several occasions; see Yule (1909, 1910, 1921). Yule (1926)
is the culmination of his efforts to unravel the puzzle of `spurious' results
using the high correlations between time series data as an example. He used
data measuring the ratio of Church of England marriages to all marriages
($x_{t}$) and the mortality rate ($y_{t}$) over the period 1866-1911, to
demonstrate that their estimated correlation $\widehat{\rho}_{xy}%
\mathit{=}.9512$ was both very high and statistically significant. He
described this result as `nonsense-correlation' because `\textsf{common sense
judges to be incorrect}.\textsf{' (p.4)} He went on to reject any attempt,
however ingenious, to rationalize such a statistical result on substantive grounds:

\textsf{\textquotedblleft Now I suppose it is possible, given a little
ingenuity and goodwill, to rationalize very nearly anything. And I can imagine
some enthusiast arguing that the fall in the proportion of Church of England
marriages is simply due to the Spread of Scientific Thinking since 1866, and
the fall in mortality is also clearly to be ascribed to the Progress of
Science; hence both variables are largely or mainly influenced by a common
factor and consequently ought to be highly correlated. But most people would,
I think, agree with me that the correlation is simply sheer nonsense; that it
has no meaning whatever; that it is absurd to suppose that the two variables
in question are in any sort of way, however indirect, causally related to one
another.\textquotedblright\ (p. 2)}

Yule (1926) attempted to articulate the premise that `nonsense-correlations'
have something to do with the fact that his time series data are \textit{not}
`random series'. He could not establish a clear and direct link between
`spurious' associations and statistical misspecification, however, because he
was missing two key components that were yet to be integrated into statistics.
The first is the notion of a `parametric statistical model', innovated by
Fisher (1922), and the second is the theory of `stochastic processes' founded
by Kolmogorov (1933). The former comprises all the probabilistic assumptions
imposed on the data, and the latter formalizes the notions of a `random
series' into a realizationn of an IID stochastic proceses,\ as well as
departures from it in the form of probabilistic concepts for dependence and heterogeneity.

\textbf{Yule's reverse engineering}. Given that there was no notion of a
prespecified parametric statistical model, comprising the probabilistic
assumptions imposed on the data, Yule resorted to `reverse engineering':

\textsf{\textquotedblleft When we find that a theoretical formula applied to a
particular case gives results which common sense judges to be incorrect, it is
generally as well to examine the particular assumptions from which it was
deduced, and see which of them are inapplicable to the case in
point.\textquotedblright\ (p. 4-5)}

He went on to consider the formula for estimating the sample standard error
and elicit the implicit probabilistic assumptions that render it a `good'
estimator of the distribution standard error. Let us emulate Yule's reverse
engineering using the sample correlation coefficient, which is the focus of
his paper:$\vspace*{-0.08in}$%
\begin{equation}%
\begin{array}
[c]{c}%
\widehat{Corr(X_{t},Y_{t})}\mathit{=}\frac{%
\begin{array}
[c]{c}%
\frac{1}{n}\sum\nolimits_{t=1}^{n}(Y_{t}-\overline{Y})(X_{t}-\overline{X})
\end{array}
\smallskip}{\sqrt{\left[  \frac{1}{n}\sum\nolimits_{t=1}^{n}(X_{t}%
-\overline{X})^{2}\right]  \left[  \frac{1}{n}\sum\nolimits_{t=1}^{n}%
(Y_{t}-\overline{Y})^{2}\right]  }},\medskip\\%
\begin{array}
[c]{c}%
\overline{X}\mathit{=}\frac{1}{n}\sum\nolimits_{t=1}^{n}X_{t},\ \overline
{Y}\mathit{=}\frac{1}{n}\sum\nolimits_{t=1}^{n}Y_{t},\ \widehat{Var(X_{t}%
)}\mathit{=}\frac{1}{n}\sum\nolimits_{t=1}^{n}(X_{t}-\overline{X}%
)^{2},\medskip\\
\widehat{Var(Y_{t})}\mathit{=}\frac{1}{n}\sum\nolimits_{t=1}^{n}%
(Y_{t}\mathit{-}\overline{Y})^{2},\ \widehat{Cov(X_{t},Y_{t})}\mathit{=}%
\frac{1}{n}\sum\nolimits_{t=1}^{n}(Y_{t}\mathit{-}\overline{Y})(X_{t}%
\mathit{-}\overline{X}),
\end{array}
\end{array}
\label{cor}%
\end{equation}
as a `good' estimator of the distribution correlation coefficient:$\vspace
*{-0.08in}$%
\[%
\begin{array}
[c]{c}%
Corr(X_{t},Y_{t})\mathit{=}\frac{Cov(X_{t},Y_{t})}{\sqrt{Var(X_{t})Var(Y_{t}%
)}}%
\end{array}
\vspace*{-0.08in}%
\]
The first assumption implicit in these formulae is the \textit{constancy} of
the moments:\vspace*{-0.08in}%
\[%
\begin{array}
[c]{c}%
E(Y_{t})\mathit{=}\mu_{1},\ E(X_{t})\mathit{=}\mu_{2},\ Var(Y_{t}%
)\mathit{=}\sigma_{11},\ Var(X_{t})\mathit{=}\sigma_{22},\ Cov(X_{t}%
,Y_{t})\mathit{=}\sigma_{12},\ t\mathit{\in}\mathbb{N},
\end{array}
\vspace*{-0.08in}%
\]
which corresponds to a form of the \textit{ID assumption}. The formulae for
$\widehat{Var(X_{t})}$ and $\widehat{Var(Y_{t})},$ implicitly assume
\textit{non-correlation} over $t\mathit{\in}\mathbb{N}$, otherwise they should
have included covariances over $t\mathit{\in}\mathbb{N}$ terms. Yule also
sought to unveil the implicit distributional assumption
\textsf{\textquotedblleft in order to reduce the formula to the very simple
form given.\textquotedblright\ (p. 5)} The sample moments are not always
`optimal' estimators of the distribution moments. For instance, the estimators
in (\ref{cor}) will be `optimal' under Normality, but they will be non-optimal
if the distribution is Uniform; see Carlton (1946).

In light of the fact that under Normality the assumption of ID reduces to the
constancy of the first two moments, and non-correlation coincides with
\textit{Independence}, one could make a case that the implicit parametric
statistical model underlying the above formulae is the\textbf{\ simple
bivariate Normal} in table 2.%
\[%
\begin{tabular}
[c]{l}\hline%
\begin{tabular}
[c]{l}%
\textbf{Table 2 - The simple} (bivariate)\textbf{\ Normal model}%
\end{tabular}
\medskip\\\hline\hline
$\left.
\begin{tabular}
[c]{lll}%
\multicolumn{3}{l}{Statistical GM:\qquad\qquad$\mathbf{Z}_{t}=\mathbf{\mu
}+\mathbf{u}_{t},$}\\
\lbrack1] & Normal: & $\mathbf{Z}_{t}\backsim\mathsf{N}(.,.),$\\
\lbrack2] & Constant mean: & $E(\mathbf{Z}_{t})\mathit{=}\mathbf{\mu},$\\
\lbrack3] & Constant covariance: & $Var(\mathbf{Z}_{t})\mathit{=}%
\mathbf{\Sigma},$\\
\lbrack4] & Independence: & $\{\mathbf{Z}_{t},\ t$\textbf{$\in$}$\mathbb{N}\}$
is independent.\\\hline
\end{tabular}
\right\}  \;t$\textbf{$\in$}$\mathbb{N},$%
\end{tabular}
\]%
\begin{equation}%
\begin{array}
[c]{c}%
\mathbf{Z}_{t}\mathit{:=}\left(
\begin{array}
[c]{c}%
y_{t}\\
X_{t}%
\end{array}
\right)  ,\ \mathbf{\mu}\mathit{:=}\left(
\begin{array}
[c]{c}%
\mu_{1}\\
\mu_{2}%
\end{array}
\right)  ,\ \mathbf{\Sigma}\mathit{:=}\left(
\begin{array}
[c]{cc}%
\sigma_{11} & \sigma_{12}\\
\sigma_{12} & \sigma_{22}%
\end{array}
\right)
\end{array}
\label{bn}%
\end{equation}

When any of the assumptions [1]-[4] are invalid for the particular data
$\mathbf{Z}_{0}$, the estimated correlation coefficient is likely to be
`spurious' (statistically untrustworthy). Granted, certain departures from
particular assumptions, such as [2]-[4], are more serious than other
departures, say from [1]. A glance at the t-plots of Yule's (1926) data
suggests, to borrow his phrase on p. 5, that:

\textsf{\textquotedblleft Neither series, obviously, in the least resembles a
random series\textquotedblright}\ (aka IID).%
\[%
{\parbox[b]{2.7959in}{\begin{center}
\includegraphics[
natheight=3.464400in,
natwidth=5.190600in,
height=1.7331in,
width=2.7959in
]%
{../../../swp55/Docs-toshiba/O6M4V000.wmf}%
\\%
\protect\begin{tabular}
[c]{l}%
Fig. 1: t-plot of $x_{t}$-ratio of Church of\protect\\
England marriages to all marriages
\protect\end{tabular}
\end{center}}}
{\parbox[b]{2.6922in}{\begin{center}
\includegraphics[
natheight=3.464400in,
natwidth=5.190600in,
height=1.7331in,
width=2.6922in
]%
{../../../swp55/Docs-toshiba/O6M4V001.wmf}%
\\%
\protect\begin{tabular}
[c]{l}%
Fig. 2: t-plot of $y_{t}$-the mortality rate\protect\\
for the period 1866-1911
\protect\end{tabular}
\end{center}}}
\]
Both data series exhibit clear departures from IID (fig. 6) in the form of
mean $t$-heterogeneity (trending mean) and dependence (irregular cycles). To
bring out the cycles in the original data more clearly one needs to subtract
the trending means using, say, a generic 3rd degree trend polynomial.%
\[%
{\parbox[b]{2.6714in}{\begin{center}
\includegraphics[
natheight=3.464400in,
natwidth=5.190600in,
height=1.7331in,
width=2.6714in
]%
{../../../swp55/Docs-toshiba/O6OF6X00.wmf}%
\\
Fig. 3: t-plot of detrended $x_t$
\end{center}}}
{\parbox[b]{2.629in}{\begin{center}
\includegraphics[
natheight=3.464400in,
natwidth=5.190600in,
height=1.7331in,
width=2.629in
]%
{../../../swp55/Docs-toshiba/O6OF6X01.wmf}%
\\
Fig. 4: t-plot of detrended $y_t$
\end{center}}}
\]

In light of the direct relationship between the correlation ($\rho_{12}$) and
the regression coefficient ($\beta_{1}$) in (\ref{cr}), one can pose the
question of statistical adequacy in the context of the Linear Regression
model, which will yield:\vspace*{-0.07in}%
\begin{equation}%
\begin{array}
[c]{c}%
y_{t}\mathbf{\mathit{=}}\underset{(1.416)}{-10.847}\mathit{+}%
\underset{(.020)}{.419}x_{t}+\widehat{u}_{t},\;R^{2}\mathbf{\mathit{=}%
}.905,\;s\mathbf{\mathit{=}}.664,\;n\mathbf{\mathit{=}}46,
\end{array}
\vspace*{-0.1in}\label{eq5}%
\end{equation}
where the standard errors are reported in brackets below the coefficient
estimates. Both coefficients $\left(  \beta_{0},\beta_{1}\right)  $ seem
statistically significant since the t-ratios are:\vspace*{-0.1in}%
\[%
\begin{array}
[c]{cc}%
\tau_{0}(\mathbf{z}_{0})\mathit{=}\frac{10.847}{1.416}\mathit{=}7.660[.000], &
\tau_{1}(\mathbf{z}_{0})\mathit{=}\frac{.419}{.020}\mathit{=}20.95[.000],
\end{array}
\vspace*{-0.1in}%
\]
and the p-values are given in square brackets. Note that the implied
correlation (see (\ref{cor})) yields the value in Yule (1926): $\widehat{\rho
}_{xy}\mathit{=}.419(\frac{4.854}{2.137})\mathit{=}.952[.000]$.

A glance at the t-plot of the residuals (fig. 5), however, indicates that
(\ref{eq5}) is statistically misspecified; assumptions [4]-[5] are likely to
be invalid. The residual t-plot differs from that of a NIID realization (fig.
6) in so far as it exhibits distinct trends and cycles. These
misspecifications are confirmed formally by the statistical significance of
the trends and lags in the auxiliary regression based on the residuals
($\widehat{u}_{t}$) from (\ref{eq5}):\vspace*{-0.18in}%
\begin{equation}%
\begin{array}
[c]{c}%
\widehat{u}_{t}\mathbf{\mathit{=}}\underset{(1.267)}{11.987}\mathit{-}%
\underset{(.016)}{.413}x_{t}\mathit{-}\underset{(.998)}{1.670}t\mathit{-}%
\underset{(.216)}{.406}t^{2}\mathit{+}\underset{(.076)}{.885}y_{t-1}%
\mathit{+}\underset{(.015)}{.006}x_{t-1}%
\end{array}
\label{eq6}%
\end{equation}%
\[%
{\parbox[b]{2.7821in}{\begin{center}
\includegraphics[
natheight=3.464400in,
natwidth=5.190600in,
height=1.7331in,
width=2.7821in
]%
{../../../swp55/Docs-toshiba/O6KGIS02.wmf}%
\\
Fig. 5: t-plot of the residuals from (\ref{eq5})
\end{center}}}
{\parbox[b]{2.6221in}{\begin{center}
\includegraphics[
natheight=3.464400in,
natwidth=5.190600in,
height=1.7331in,
width=2.6221in
]%
{../../../swp55/Docs-toshiba/O6KGIS03.wmf}%
\\
Fig. 6: t-plot of NIID data
\end{center}}}
\]
These results suggest that the estimator of $\beta_{1}$ is
inconsistent\textit{,} and the t-test for its significance is statistically
untrustworthy. Taking mean deviations from $\left(  \overline{x},\overline
{y}\right)  $ when the actual means are trending, will render all the above
estimators in (\ref{cor}) inconsistent.

In light of these departures from the IID assumptions,\ (\ref{cor}) is an
\textit{inconsistent} estimator of the correlation coefficient, and thus
statistically spurious. Indeed, one can easily show that when the data are
de-trended and de-memorized (subtract the temporal dependence using 2 lags) to
render\ (\ref{cor}) an appropriate estimator, the estimated correlation is:
$\widehat{\rho}_{xy}\mathit{=}.003[.985],$ which is totally statistically insignificant.

In summary, the notion of statistical adequacy provides a direct and testable
link between statistical misspecification and statistically untrustworthy
(spurious) associations, or inference results more generally. A likely
criticism of this link is that the probability assumptions of the assumed
model in (\ref{bn}) are too strong, in contrast to the current statistical
practice favoring as weak a set of assumptions as possible. The short reply to
such a charge is that weaker but non-testable assumptions (i) do not render
the assumed model less vulnerable to statistical misspecifications, and (ii)
they underestimate the importance of securing statistical adequacy. In
addition, weak assumptions often rely on asymptotic sampling distributions
without testing the validity of the assumptions invoked by limit theorems; see
Spanos (2015). The truth of the matter is that the trustworthiness of all
inference results will rely exclusively on the approximate validity of the
probabilistic assumptions imposed on $\mathbf{Z}_{0},$ and nothing else. As
argued by Le Cam (1986, p. xiv): \textsf{\textquotedblleft... limit theorems
"as }$n$\textsf{\ tends to infinity"\ are logically devoid of content about
what happens at any particular }$n$\textsf{. }\vspace*{-0.15in}

\subsection{\textbf{Example 3}. \textbf{The third variable reversion}%
\vspace*{-0.1in}}

In this sub-section we consider an empirical example based on cross-section
data because statistical adequacy is less well appreciated in such a context.

Consider the case where a practitioner wants to evaluate the effect of
education on a person's income. The data refer to education, $x_{t}$-years of
schooling, and income, $y_{t}$-thousands of dollars, for $n\mathit{=}100$
working people within the age group of 30-40 years old selected from a city's
population. The estimated LR model yields:\vspace*{-0.08in}%
\begin{equation}%
\begin{array}
[c]{c}%
y_{t}\mathbf{\mathit{=}}\underset{(1.957)}{53.694}\mathit{-}%
\underset{(.147)}{.474}x_{t}+\widehat{u}_{t},\;R^{2}\mathbf{\mathit{=}%
}.096,\;s\mathbf{\mathit{=}}3.307,\;n\mathbf{\mathit{=}}100.
\end{array}
\vspace*{-0.08in}\label{eq1}%
\end{equation}
Both coefficients $\left(  \beta_{0},\beta_{1}\right)  $ appear to be
statistically significant since the t-ratios are:\vspace*{-0.08in}%
\[%
\begin{array}
[c]{cc}%
\tau_{0}(\mathbf{z}_{0})\mathit{=}\frac{53.694}{1.957}\mathit{=}%
27.437[.0000], & \tau_{1}(\mathbf{z}_{0})\mathit{=}\frac{.474}{.147}%
\mathit{=}3.224[.001].
\end{array}
\vspace*{-0.08in}%
\]
The practitioner is surprised by the negative sign of the coefficient of
$x_{t},$ since that implies that additional years of education contribute
negatively to one's income. He takes a closer look at the data and decides to
run separate linear regressions for men ($n_{1}\mathbf{\mathit{=}}50$) and
women ($n_{2}\mathbf{\mathit{=}}50$).

The estimated LR model for men yields:\vspace*{-0.08in}
\begin{equation}%
\begin{array}
[c]{c}%
y_{1t}\mathbf{\mathit{=}}\underset{(2.236)}{45.229}\mathit{+}%
\underset{(.172)}{.409}x_{1t}+\widehat{u}_{1t},\;R^{2}\mathbf{\mathit{=}%
}.973,\;s\mathbf{\mathit{=}}2.371,\;n_{1}\mathbf{\mathit{=}}50.
\end{array}
\vspace*{-0.1in}\label{eq2}%
\end{equation}

The estimated LR model for women yields:\vspace*{-0.08in}%
\begin{equation}%
\begin{array}
[c]{c}%
y_{2t}\mathbf{\mathit{=}}\underset{(2.937)}{35.106}\mathit{+}%
\underset{(.199)}{.675}x_{2t}+\widehat{u}_{2t},\;R^{2}\mathbf{\mathit{=}%
}.193,\;s\mathbf{\mathit{=}}2.124,\;n_{2}\mathbf{\mathit{=}}50.
\end{array}
\vspace*{-0.1in}\label{eq3}%
\end{equation}

The estimation results in (\ref{eq2})-(\ref{eq3}) indicate that for both
estimated regressions:\smallskip

-- the coefficients $\left(  \beta_{0},\beta_{1}\right)  $ are statistically
significant, and\smallskip

-- the sign of the coefficient $\beta_{1},$ of education variable $(x_{t})$,
is positive.

The positive sign of the estimated $\beta_{1}$ clearly contradicts the
negative sign in (\ref{eq1}), which is usually interpreted as a case where a
statistical association is reversed. This is considered as an example of
Simpson's paradox when viewed from perspective (b), where gender ($D_{t}$) is
viewed as a confounding variable that correlates with both $y_{t}$-income and
$x_{t}$-education. In econometrics, this is usually viewed as a case of
`omitted-variable bias'; see Greene (2011). According to Pearl (2014), p. 10,
the only way to decide whether to rely on the aggregated data regression in
(\ref{eq1}) or the disaggregated data regressions (\ref{eq2})-(\ref{eq3}) is
to use causal calculus.

\newpage

\vspace*{-0.5in}%

\[%
{\parbox[b]{3.039in}{\begin{center}
\includegraphics[
natheight=3.464400in,
natwidth=5.190600in,
height=1.7331in,
width=3.039in
]%
{../../../swp55/Docs-toshiba/O6HX9702.wmf}%
\\
Fig. 7: Residuals from equation (\ref{eq1})
\end{center}}}
\]

Upon reflection, however, the statistical misspecification perspective
provides an alternative way to resolve the paradox on statistical adequacy
grounds. The above estimation and testing results in (\ref{eq1})-(\ref{eq3})
are trustworthy only when the model assumptions [1]-[5] are valid for the
particular data for each of the three estimated equations. Estimating the
aggregated data equation (\ref{eq1}) using `gender' as the ordering of
interest, and plotting the residuals (fig. 7) suggests that (\ref{eq1}) is
\textit{statistically misspecified} because the t-plot is far from being
Normal white-noise. Assumption [5] is clearly invalid since its sample mean is
not constant around zero, but shifts from positive for the first half to
negative for the second, and the variance appears smaller for the second half;
see Spanos (1999), ch. 5. This form of t-heterogeneity differs from that in
Yule's data discussed above.

This is confirmed by the auxiliary regression using the residuals
($\widehat{u}_{t}$): \vspace*{-0.08in}%
\[%
\begin{array}
[c]{c}%
\widehat{u}_{t}\mathbf{\mathit{=}}\underset{(2.00)}{-15.986}\mathit{+}%
\underset{(.134)}{.967}x_{t}+\underset{(.616)}{6.556}D_{t},\;R^{2}%
\mathbf{\mathit{=}}.54,\;s\mathbf{\mathit{=}}2.263,\;n\mathbf{\mathit{=}}100,
\end{array}
\vspace*{-0.12in}%
\]
where $D_{t}\mathit{:=}(1,1,...,1,0,0...,0)$, 1-male, 0-female, since its
coefficient is statistically significant: $\tau_{3}(\mathbf{z}_{0}%
)\mathit{=}\frac{6.556}{.616}\mathit{=}10.643[.0000].$

This suggests that the statistical misspecification perspective provides a
very different interpretation of the reversion results and offers an
alternative way to resolve the apparent paradox.

First, the key to resolving any seemingly conflicting inference results is not
the notion of `confounding' (Pearl, 2014), but that of statistical adequacy.
Before one can talk about any form of reversal of a statistical association,
one needs to establish that all the associations involved are statistically
trustworthy. Any claim that there is a `reversal of association' between
equation (\ref{eq1}) and (\ref{eq2})-(\ref{eq3}) is misleading since the
aggregated data equation (\ref{eq1}) is statistically misspecified. Therefore,
the inference that the coefficient of $x_{t}$ is negative and statistically
significant is untrustworthy; an artifact of imposing invalid probabilistic
assumptions on data $\mathbf{z}_{0}$. Hence, the aggregate data misrepresent
the relationship between $y_{t}$ and $x_{t}.$

\newpage

\vspace*{-0.5in}%

\[%
{\parbox[b]{2.6221in}{\begin{center}
\includegraphics[
natheight=3.464400in,
natwidth=5.190600in,
height=1.7331in,
width=2.6221in
]%
{../../../swp55/Docs-toshiba/O6PLIX02.wmf}%
\\
Fig. 8: t-plot of income ($y_t)$
\end{center}}}
{\parbox[b]{2.6221in}{\begin{center}
\includegraphics[
natheight=3.464400in,
natwidth=5.190600in,
height=1.7331in,
width=2.6221in
]%
{../../../swp55/Docs-toshiba/O6PLIX03.wmf}%
\\
Fig. 9: t-plot of education ($x_t)$
\end{center}}}
\]

Second, the diagnosis that the variable `gender', represented by $D_{t}$ is a
missing `confounder' seems rather misleading for two reasons. The information
pertaining to the ordering(s) of potential interest is already in the original
data $\mathbf{z}_{0}$ (see figures 8-9). In addition, defining the confounder
as an omitted variable $Z$ which is related to the included variables $X$ and
$Y$ in the right way, requires that $Z$ is stochastic variable, not a
deterministic ordering. For statistical inference purposes, the inclusion of
generic terms such as shifts in the mean, trends and lags in the estimated
equation could, in certain cases, secure statistical adequacy, without having
to resort to finding additional explanatory variables.

In the case of (\ref{eq1}), a more pertinent explanation is that the modeler
neglected, or chose to ignore, the heterogeneity in the data by assuming
constant mean and variance (ID) for both data series with respect to the
ordering, \textit{gender}. Such forms of misspecification pertain to
statistical information contained in the data, which could be generically
modeled using shift functions or/and trend polynomials in $t$ or/and lags,
respectively; see Spanos (1999).

In the case of example 3, one could attempt to respecify the original equation
in (\ref{eq1}) by including the dummy variable ($D_{t}$):\vspace*{-0.08in}%
\begin{equation}%
\begin{array}
[c]{c}%
y_{t}\mathbf{\mathit{=}}\underset{(2.00)}{37.639+}\underset{(.616)}{6.556}%
D_{t}+\underset{(.134)}{.501}x_{t}+\widehat{u}_{t},\;R^{2}\mathbf{\mathit{=}%
}.58,\;s\mathbf{\mathit{=}}2.272,\;n\mathbf{\mathit{=}}100.
\end{array}
\vspace*{-0.08in}\label{eq4}%
\end{equation}
As it stands, the coefficient of $x_{t}$ represents an misleading weighted
average of the two coefficients from the disaggregated data in (\ref{eq2}%
)-(\ref{eq3}). The residuals from this equation (fig. 10) do not indicate any
major departures from assumptions [1]-[5], but in practice one needs to apply
thorough misspecification testing to confirm or deny such a claim. For
instance, one needs to test that the variances of the residuals in the two
sub-samples are equal; see Spanos (1986), p. 481-3.

Finally and most importantly, using the statistical misspecification
perspective, one can distinguish clearly between example 1 (section 2), and
examples 2 and 3 above. The key difference is that in example 1 both LR models
(\ref{rm1}) and (\ref{rm2}) are statistically adequate. In contrast, in
example 3 the estimated LR model (\ref{eq1}) based on aggregated data, is
statistically misspecified which renders the estimated coefficients and
t-tests statistically untrustworthy. Hence, there was never a statistically
trustworthy result at the aggregate level that gave rise to a reversal of
associations. In example 3, only the disaggregated data give rise to
statistically reliable inferences. This calls seriously into question the
conventional wisdom that these two cases as identical, as stated by Samuels
(1993), p. 87:

\textsf{\textquotedblleft Simpson's paradox is actually no more paradoxical
than the reversal or distortion of association in other settings, no more, for
instance, than the familiar fact that a partial regression coefficient can
have a different sign from a simple regression coefficient.\textquotedblright}%

\[%
{\parbox[b]{3.039in}{\begin{center}
\includegraphics[
natheight=3.464400in,
natwidth=5.190600in,
height=1.7331in,
width=3.039in
]%
{../../../swp55/Docs-toshiba/O6LWNA00.wmf}%
\\
Fig. 10: Residuals from equation (\ref{eq4})
\end{center}}}
\]

In concluding this section, it is important to emphasize that the statistical
misspecification perspective requires one to know the complete set of
probabilistic assumptions imposed on the data, i.e. the statistical model.
More often than not, practitioners have an incomplete picture of the
statistical model, they rarely test its assumptions, and thus the ensuing
inference results are often untrustworthy. Hence, in evaluating published
empirical papers, it is sometimes useful to employ Yule's reverse engineering
to uncover the statistical model. \vspace*{-0.15in}

\section{Misspecified Bernoulli models\vspace*{-0.1in}}

In this section we will revisit two cross-section data sets that have been
widely discussed in the statistics literature, using the statistical
misspecification perspective.\vspace*{-0.15in}

\subsection{\textbf{Example 4. The UC Berkeley admissions data}\vspace
*{-0.1in}}

Bickel et al. (1975) published an influential paper in \textit{Science}, where
they illustrated Simpson's paradox using cross-section data based on UC
Berkeley admissions, for the Fall of 1973. Their perspective relates to
perspective (a) and pertains to the reversal of a statistical relationship
between the aggregated data, at the university level, and the disaggregated
data, at the department level. The aggregate data are shown in table 3 and the
data for the largest 5 departments, denoted by A-F, are given in table 4;

see \textrm{https://en.wikipedia.org/wiki/Simpson\%27s\_paradox}.

The estimated parameter $\theta\mathit{=}\mathbb{P}(X\mathit{=}1)$ based on
the aggregate data (table 3) indicates that the rate of admissions for female
candidates ($\ \widehat{\theta}_{F}\mathit{=}.35$) is smaller that for male
candidates ($\widehat{\theta}_{M}\mathit{=}.44$), and a test for the
difference indicated a statistically significant difference; see Bickel et al.
(1975). At the department level, however, the admissions rate for females is
greater than that of males in five out of the six departments shown in table
4. This is interpreted as an apparent reversal of the inference based on the
aggregate data. The statistical misspecification perspective, however,
suggests that the estimated admissions rate using the aggregated\ data is
statistically untrustworthy. Let us unpack this claim.%
\[%
\begin{tabular}
[c]{c}\hline%
\begin{tabular}
[c]{l}%
\textbf{Table 3:} Admissions Aggregate Data
\end{tabular}
\medskip\\\hline\hline%
\begin{tabular}
[c]{|l||ll||r|}\hline
& \textbf{M}$\text{\textbf{ales}}$ & \textbf{F}$\text{\textbf{emales}}$ &
$\text{\textbf{total}}$\\\hline\hline
Admit & $3738$ & $1494$ & $5232$\\\hline
Deny & $4704$ & $2827$ & $7531$\\\hline\hline
Total & $8442$ & $4321$ & $12763$\\\hline\hline
\multicolumn{4}{|l|}{$%
\begin{array}
[c]{c}%
\widehat{\theta}_{M}\mathit{=}\frac{3738}{8442}\mathit{=}.44,\ \widehat{\theta
}_{F}\mathit{=}\frac{1494}{4321}\mathit{=}.35
\end{array}
$}\\\hline
\end{tabular}
\end{tabular}
\]

\[%
\begin{tabular}
[c]{ll}\hline
\multicolumn{2}{c}{%
\begin{tabular}
[c]{l}%
\textbf{Table 4: Admissions disaggregated data for departments A-F}%
\end{tabular}
\medskip}\\\hline\hline
$%
\begin{tabular}
[c]{|l||ll||r|}\hline
\fbox{%
\begin{tabular}
[c]{l}%
A
\end{tabular}
} & \textbf{M}$\text{\textbf{ales}}$ & \textbf{F}$\text{\textbf{emales}}$ &
$\text{\textbf{total}}$\\\hline\hline
Admit & $512$ & $89$ & $601$\\\hline
Deny & $313$ & $19$ & $332$\\\hline\hline
Total & $825$ & $108$ & 933\\\hline\hline
\multicolumn{4}{|l|}{$%
\begin{array}
[c]{c}%
\widehat{\theta}_{AM}\mathit{=}\frac{512}{825}\mathit{=}.62,\ \widehat{\theta
}_{AF}\mathit{=}\frac{89}{108}\mathit{=}.82
\end{array}
$}\\\hline
\end{tabular}
$ & $%
\begin{tabular}
[c]{|l||ll||r|}\hline
\fbox{%
\begin{tabular}
[c]{l}%
B
\end{tabular}
} & \textbf{M}$\text{\textbf{ales}}$ & \textbf{F}$\text{\textbf{emales}}$ &
$\text{\textbf{total}}$\\\hline\hline
Admit & $353$ & $17$ & $370$\\\hline
Deny & $207$ & $8$ & $215$\\\hline\hline
Total & $560$ & $25$ & $585$\\\hline\hline
\multicolumn{4}{|l|}{$%
\begin{array}
[c]{c}%
\widehat{\theta}_{BM}\mathit{=}\frac{353}{560}\mathit{=}.63,\ \widehat{\theta
}_{BF}\mathit{=}\frac{17}{25}\mathit{=}.68
\end{array}
$}\\\hline
\end{tabular}
$\\
\multicolumn{2}{l}{}\\
$%
\begin{tabular}
[c]{|l||ll||r|}\hline
\fbox{%
\begin{tabular}
[c]{l}%
C
\end{tabular}
} & \textbf{M}$\text{\textbf{ales}}$ & \textbf{F}$\text{\textbf{emales}}$ &
$\text{\textbf{total}}$\\\hline\hline
Admit & $120$ & $202$ & $322$\\\hline
Deny & $205$ & $391$ & $596$\\\hline\hline
Total & $325$ & $593$ & $918$\\\hline\hline
\multicolumn{4}{|l|}{$%
\begin{array}
[c]{c}%
\widehat{\theta}_{CM}\mathit{=}\frac{120}{325}\mathit{=}.37,\ \widehat{\theta
}_{CF}\mathit{=}\frac{202}{593}\mathit{=}.34
\end{array}
$}\\\hline
\end{tabular}
$ & $%
\begin{tabular}
[c]{|l||ll||r|}\hline
\fbox{%
\begin{tabular}
[c]{l}%
D
\end{tabular}
} & \textbf{M}$\text{\textbf{ales}}$ & \textbf{F}$\text{\textbf{emales}}$ &
$\text{\textbf{total}}$\\\hline\hline
Admit & $139$ & $131$ & $270$\\\hline
Deny & $278$ & $244$ & $522$\\\hline\hline
Total & $417$ & $375$ & $792$\\\hline\hline
\multicolumn{4}{|l|}{$%
\begin{array}
[c]{c}%
\widehat{\theta}_{DM}\mathit{=}\frac{139}{417}\mathit{=}.33,\ \widehat{\theta
}_{DF}\mathit{=}\frac{131}{375}\mathit{=}.35
\end{array}
$}\\\hline
\end{tabular}
$\\
\multicolumn{2}{l}{}\\\cline{2-2}%
$%
\begin{tabular}
[c]{|l||ll||r|}\hline
\fbox{%
\begin{tabular}
[c]{l}%
E
\end{tabular}
} & \textbf{M}$\text{\textbf{ales}}$ & \textbf{F}$\text{\textbf{emales}}$ &
$\text{\textbf{total}}$\\\hline\hline
Admit & $53$ & $94$ & $147$\\\hline
Deny & $138$ & $199$ & $337$\\\hline\hline
Total & $191$ & $293$ & $484$\\\hline\hline
\multicolumn{4}{|l|}{$%
\begin{array}
[c]{c}%
\widehat{\theta}_{EM}\mathit{=}\frac{53}{191}\mathit{=}.28,\ \widehat{\theta
}_{EF}\mathit{=}\frac{94}{293}\mathit{=}.32
\end{array}
$}\\\hline
\end{tabular}
$ & \multicolumn{1}{l|}{$%
\begin{tabular}
[c]{|l||ll||r|}\hline
\fbox{%
\begin{tabular}
[c]{l}%
F
\end{tabular}
} & \textbf{M}$\text{\textbf{ales}}$ & \textbf{F}$\text{\textbf{emales}}$ &
$\text{\textbf{total}}$\\\hline\hline
Admit & $22$ & $23$ & $45$\\\hline
Deny & $351$ & $318$ & $669$\\\hline\hline
Total & $373$ & $341$ & $714$\\\hline\hline
\multicolumn{4}{|l|}{$%
\begin{array}
[c]{c}%
\widehat{\theta}_{FM}\mathit{=}\frac{22}{373}\mathit{=}.06,\ \widehat{\theta
}_{FF}\mathit{=}\frac{23}{341}\mathit{=}.07
\end{array}
$}\\\hline
\end{tabular}
$}%
\end{tabular}
\]

What is missing from the discussion of the traditional association reversal
interpretation is any evidence that the above inferences based on the
estimated $\theta$ is trustworthy. Such evidence can be secured by testing the
validity of the assumptions invoked by the above inferences, which comprise
the underlying statistical model: a \textit{bivariate} version of the simple
Bernoulli model (table 5), with $\theta$ replaced with a vector of unknown
parameters $\mathbf{\theta}\mathit{:=}(\theta_{00},\theta_{01},\theta
_{10},\theta_{11})$; see Bishop et al. (1975).

In relation to the Bernoulli model, it is important to point out that $\theta$
is also the mean of process underlying the data, as well as determining the
variance, i.e.$\vspace*{-0.07in}$%
\[%
\begin{array}
[c]{c}%
E(X_{t})\mathit{=}\theta,\ Var(X_{t})\mathit{=}\theta(1\mathit{-}%
\theta),\ 0\leq\theta\leq1,\ \forall t\mathbf{\mathit{\in}}\mathbb{N}.
\end{array}
\vspace*{-0.07in}%
\]

\[%
\begin{tabular}
[c]{l}\hline%
\begin{tabular}
[c]{l}%
\textbf{Table 5 - The simple Bernoulli model}%
\end{tabular}
\medskip\\\hline\hline
$\left.
\begin{tabular}
[c]{lll}%
\multicolumn{3}{l}{Statistical GM:\qquad\qquad$X_{t}=\theta+u_{t},\;$}\\
\lbrack1] & Bernoulli: & $X_{t}\backsim\mathsf{Ber}(.,.),$\\
\lbrack2] & Constant mean: & $E(X_{t})\mathit{=}\theta,$\\
\lbrack3] & Constant variance: & $Var(X_{t})\mathit{=}\theta(1-\theta),$\\
\lbrack4] & Independence: & $\{X_{t},\ t$\textbf{$\mathit{\in}$}$\mathbb{N}\}$
is independent.
\end{tabular}
\right\}  t$\textbf{$\mathit{\in}$}$\mathbb{N}$\\\hline
\end{tabular}
\]

A glance at the estimated $\theta$ for males and females at the department
level in table 4, indicate clearly that the estimated means and variances
differ, not only from those based on aggregate data, but also between
departments. This renders assumptions [2] and [3] invalid at the aggregate
level. That is, when the data are aggregated the process $\{X_{t}%
,\ t\mathit{\in}\mathbb{N}\}$ is no longer Identically Distributed (ID) with
respect to the ordering `gender'.

In light of this, the association reversal is spurious because the estimated
values:$\vspace*{-0.07in}$%
\[%
\begin{array}
[c]{c}%
\widehat{\theta}_{M}\mathit{=}\frac{3738}{8442}\mathit{=}.44,\ \widehat{\theta
}_{F}\mathit{=}\frac{1494}{4321}\mathit{=}.35
\end{array}
\vspace*{-0.07in}%
\]
from the aggregated data. This is because the estimators of $\theta$ using the
aggregated data will be \textit{inconsistent} estimators of the true $\theta$.%
\[
\hspace*{-0.25in}%
{\parbox[b]{2.6645in}{\begin{center}
\includegraphics[
natheight=3.464400in,
natwidth=5.190600in,
height=1.7331in,
width=2.6645in
]%
{../../../swp55/Docs-toshiba/O6LWRG03.wmf}%
\\%
\protect\begin{tabular}
[c]{l}%
Fig. 11: t-plot of $X_{t}\backsim$\textsf{BerIID}$(.2,.16)$%
\protect\end{tabular}
\end{center}}}
{\parbox[b]{2.7475in}{\begin{center}
\includegraphics[
natheight=3.464400in,
natwidth=5.190600in,
height=1.7331in,
width=2.7475in
]%
{../../../swp55/Docs-toshiba/O6LWRG04.wmf}%
\\%
\protect\begin{tabular}
[c]{l}%
Fig. 12: t-plot of $Z_{t}\backsim$\textsf{BerIID}$(.6,.24)$%
\protect\end{tabular}
\end{center}}}
\]

To see how this arises in practice, consider figures 11-12 that represent the
t-plots of two Bernoulli IID [\textsf{BerIID}($E(X_{t}),\ Var(X_{t}))]$
processes with $\theta\mathit{=}.2$ and $\theta\mathit{=}.6,$ respectively.

As can be seen from these figures, the concentration of longer `runs' [group
of successive values of 0's or 1's] switches from the value 0 to the value 1
as $\theta$ increases above $.5$. Hence, any attempt to ignore the differences
in the two moments of such processes will give rise to a misspecified
Bernoulli model. That invalidates any inferences based on the aggregate data,
and the only potentially reliable inference can be drawn from the
disaggregated data.\vspace*{-0.15in}

\subsection{Example 5. The Lindley and Novick (1981) data\vspace*{-0.1in}}

The above statistical misspecification perspective can be used to explain the
seemingly contradictory results in Lindley and Novick's (1981) hypothetical
data shown below. This is a particularly interesting example because, as
argued by Armistead (2014), the ordering of interest might become apparent
after the data are collected. For instance, in a clinical trial the `gender'
or/and `age' ordering(s) might turn out to be relevant after the data are collected.

The estimated $\theta$'s for the aggregated data in table 6:$\vspace
*{-0.07in}$%
\[%
\begin{array}
[c]{c}%
\widehat{\theta}_{W}\mathit{=}\frac{20}{40}\mathit{=}.5,\ \widehat{\theta}%
_{B}\mathit{=}\frac{16}{40}\mathit{=}.4,
\end{array}
\vspace*{-0.07in}%
\]
are very different from those based on the disaggregated data (table 7),
rendering the former statistically untrustworthy because it imposes an invalid
assumption: the means of the Bernoulli process underlying the disaggregated
data are constant.%
\[%
\begin{tabular}
[c]{l}\hline%
\begin{tabular}
[c]{l}%
\textbf{Table 6: Lindley-Novick}\\
\textbf{\ \ \ \ Aggregated Data}%
\end{tabular}
\medskip\\\hline\hline%
\begin{tabular}
[c]{|l||ll||r|}\hline
& \textbf{White} & \textbf{Black} & $\text{\textbf{total}}$\\\hline\hline
High & $20$ & $16$ & $36$\\\hline
Low & $20$ & $24$ & $44$\\\hline\hline
Total & $40$ & $40$ & $80$\\\hline
\multicolumn{4}{|l|}{$%
\begin{array}
[c]{c}%
\widehat{\theta}_{W}\mathit{=}\frac{20}{40}\mathit{=}.5,\ \widehat{\theta}%
_{B}\mathit{=}\frac{16}{40}\mathit{=}.4
\end{array}
$}\\\hline
\end{tabular}
\end{tabular}
\]%
\[%
\begin{tabular}
[c]{ll}\hline
\multicolumn{2}{c}{%
\begin{tabular}
[c]{l}%
\textbf{Table 7: Lindley-Novick disaggregated data}%
\end{tabular}
}\\\hline\hline%
\begin{tabular}
[c]{|l||ll||r|}\hline
Short & \textbf{White} & \textbf{Black} & $\text{\textbf{total}}%
$\\\hline\hline
High & \multicolumn{1}{||c}{$2$} & \multicolumn{1}{c||}{$9$} &
\multicolumn{1}{||c|}{$11$}\\\hline
Low & \multicolumn{1}{||c}{$8$} & \multicolumn{1}{c||}{$21$} &
\multicolumn{1}{||c|}{$29$}\\\hline\hline
Total & \multicolumn{1}{||c}{$10$} & \multicolumn{1}{c||}{$30$} &
\multicolumn{1}{||c|}{$40$}\\\hline\hline
\multicolumn{4}{|l|}{$%
\begin{array}
[c]{c}%
\widehat{\theta}_{SW}\mathit{=}\frac{2}{10}\mathit{=}.2,\ \widehat{\theta
}_{SB}\mathit{=}\frac{9}{30}\mathit{=}.3
\end{array}
$}\\\hline
\end{tabular}
&
\begin{tabular}
[c]{|l||ll||r|}\hline
Tall & \textbf{White} & \textbf{Black} & $\text{\textbf{total}}$\\\hline\hline
High & \multicolumn{1}{||c}{$18$} & \multicolumn{1}{c||}{$7$} &
\multicolumn{1}{||c|}{$25$}\\\hline
Low & \multicolumn{1}{||c}{$12$} & \multicolumn{1}{c||}{$3$} &
\multicolumn{1}{||c|}{$15$}\\\hline\hline
Total & \multicolumn{1}{||c}{$30$} & \multicolumn{1}{c||}{$10$} &
\multicolumn{1}{||c|}{$40$}\\\hline\hline
\multicolumn{4}{|l|}{$%
\begin{array}
[c]{c}%
\widehat{\theta}_{TW}\mathit{=}\frac{18}{30}\mathit{=}.6,\ \widehat{\theta
}_{TB}\mathit{=}\frac{7}{10}\mathit{=}.7
\end{array}
$}\\\hline
\end{tabular}
\end{tabular}
\]

\section{Revisiting the causal modeling `solution'\vspace*{-0.1in}}

Pearl's (2014) claims that the only way to resolve the paradox is to use
causal calculus:

\textsf{\textquotedblleft I am not aware of another condition that rules out
effect reversal with comparable assertiveness and generality, requiring only
that }$Z$\textsf{\ not be affected by our action, a requirement satisfied by
all treatment-independent covariates }$Z$\textsf{. Thus, it is hard, if not
impossible, to explain the surprise part of Simpson's reversal without
postulating that human intuition is governed by causal calculus together with
a persistent tendency to attribute causal interpretation to statistical
associations.\textquotedblright\ (p. 10)}

Viewing examples 2-5 from the misspecification perspective, however, lends
support to the Armistead's (2014) key argument:

\textsf{\textquotedblleft Whether causal or not, third variables can convey
critical information about a first-order relationship, study design, and
previously unobserved variables. Any conditioning on a nontrivial third
variable that produces Simpson's Paradox should be carefully examined before
either the aggregated or the disaggregated findings are accepted, regardless
of whether the third variable is thought to be causal. In some cases, neither
set of data is trustworthy; in others, both convey information of
value.\textquotedblright\ (p. 1)}

Indeed, in cases where the `third variable' represents an \textit{ordering} of
potential interest for the particular data, the only relevant criterion to
decide which orderings are relevant for the statistical analysis of the
particular data is the \textit{statistical adequacy} of the estimated
equations. That is, when two or more alternative orderings are potentially
relevant for a particular data set, one needs to test the statistical adequacy
of all three equations relative to each of these orderings before one could
draw reliable conclusions concerning how to resolve any apparently paradoxical
results. Where does this leave the Pearl (2014) claim quoted above?\vspace
*{-0.15in}

\subsection{Statistical vs. substantive adequacy\vspace*{-0.1in}}

Cartwright (1979) rightly points out that reliance on regularities and
frequencies for statistical inference purposes is not sufficient for
representing substantively meaningful causal relations. On the other hand,
imposing causal relations that belie the chance regularities in the data would
only give rise to untrustworthy inference results. While the causal dimension
remains an important component in delineating the issues raised by Simpson's
paradox, it is not the only relevant, or even the most important, dimension in
unraveling the puzzle. Indeed, the suggestion that in cases where the third
variable (ordering of interest) is noncausal one should accept the results
based on the aggregated data (Pearl, 2009), is called into question by
examples 2-5. This is because when the model estimated using the aggregated
data is statistically misspecified, the causal inference results pertaining to
conditional independence are likely to be untrustworthy. One way or another,
the modeler needs to account for the statistical information not accounted for
by the original statistical model, with a view to ensure the trustworthiness
of the ensuring statistical results.

It is interesting to note that Yule (1926) considered the third variable
causal explanation, but questioned its value as a general `solution' to the problem:

\textsf{\textquotedblleft Now it has been said that to interpret such
correlations as implying causation is to ignore the common influence of the
time-factor. While there is a sense -- a special and definite sense -- in
which this may perhaps be said to cover the explanation, as will appear in the
sequel, to my own mind the phrase has never been intellectually
satisfying.\textquotedblright\ (p. 4)}

A crucial issue that needs to be addressed by the causal explanation is that
conditioning on a third variable is not as straightforward as adherents to
this `explanation' of Simpson's paradox would have us believe. In practice,
the question whether a particular variable $Z_{t}$ constitutes a confounder is
not just a matter of testing whether $Z_{t}$ relates to $Y_{t}$ and $X_{t} $
the right way; see Pearl (2009), Spirtes et al. (2000). Before such testing
can even begin, one needs to test for the statistical adequacy of the
estimated model with respect to a relevant ordering. Although `time' is the
obvious ordering for time series, it is no different than other deterministic
orderings for cross-section data such as `gender', marital status, age,
geographical position, etc.; only the scale of measurement differs. When the
original model is statistically misspecified, it needs to be respecified with
a view to secure statistical adequacy. Often one can restore statistical
adequacy using generic terms relating to that ordering. To secure substantive
adequacy, however, one needs to replace such generic terms with\ proper
explanatory random variables without foregoing the statistical adequacy. The
latter ensures the reliability of testing whether $Z_{t}$ is a confounder or
not; see Spanos (2006b).

Yule (1926) considered `time' as a third variable and expressed his misgivings:

\textsf{\textquotedblleft I cannot regard time \textit{per se} as a causal
factor; and the words only suggest that there is some third quantity varying
with the time to which the changes in both the observed variables are
due.\textquotedblright\ (p. 4)}

Viewing his comment from the vantage point of today's probabilistic
perspective, the proposal to `condition' on a third variable raises technical
issues, since the conditional distribution, defined by:$\vspace*{-0.07in}$%
\[%
\begin{array}
[c]{c}%
f(y_{t}\mathit{\mid}x_{t},d_{t};\mathbf{\varphi})\mathit{=}\frac
{f(y_{t}\mathit{,}x_{t},d_{t};\mathbf{\psi})}{f(d_{t};\mathbf{\phi}%
)},\ \forall y_{t}\mathit{\in}\mathbb{R}_{Y},
\end{array}
\vspace*{-0.07in}%
\]
makes no probabilistic sense when $d_{t}$ is a \textit{determinist ordering}
(variable) such as time; see Williams (1991). This issue arises more clearly
in cases where the ordering was deemed potentially important after the data
have been collected, such as having plants grow short or tall, blood pressure
being high or low, black or white plants, etc.; see Armistead (2014). How does
one bridge the gap between a deterministic ordering of interest and
conditioning on a third random variable related to that ordering?

\textbf{Separating modeling from inference}. The statistical misspecification
perspective suggests that to ensure the reliability of inference one needs to
separate the initial stages of \textit{specification} (initial model
selection) \textit{misspecification testing} and \textit{respecification},
from \textit{inference} proper. The latter includes testing for
\textit{substantive adequacy}, such as attributing causality to statistical
associations. In practice, this requires focusing first on the ordering(s) of
interest that could potentially reveal statistical misspecifications that
pertain to dependence and heterogeneity uncovered by misspecification testing.
The next step is to respecify the initial model with a view to account for the
statistical information revealed by the misspecification testing. This is
usually achieved by employing \textit{generic} terms, such as shifts, trends
and lags, to `capture' such forms of systematic statistical information. Once
statistical adequacy is secured one can then proceed to `model' such
information by replacing the generic terms with appropriate explanatory
variables with a view to improve the \textit{substantive adequacy} without
forgoing the statistical adequacy. This is because a third degree trend
polynomial might capture the mean heterogeneity in the data to ensure the
statistical reliability of inference, but from the substantive perspective it
represents ignorance. Replacing the trend polynomial with explanatory
variables without forsaking statistical adequacy will add to our understanding
of the phenomenon of interest; see Spanos (2010).

Viewing the problem from a broader perspective, the primary reason for the
untrustworthiness is that the question of probing for the nature of any causal
connections pertains to \textit{substantive}, and not \textit{statistical
adequacy}, even though the distinction between the two might not always be
clear cut or obvious; see Spanos (2010). This distinction is crucial because
any attempt to probe for substantive adequacy, including causal connections,
before securing statistical adequacy\ of the assumed statistical model is
likely to give rise to unreliable results. To avoid this problem of unreliable
inferences, one needs to establish the statistical adequacy of the original
model first before probing for any form of substantive adequacy, such as
attributing a causal interpretation to statistical associations. These include
probing for the appropriateness of a particular confounder or choosing between
different potential confounders; see Spanos (2006b) for an extensive discussion.

This distinction is crucial in differentiating between \textit{statistically}
and \textit{substantively} `spurious' inferential results. Unfortunately, in
the statistics and philosophy of science literatures the term `spurious' is
often used to describe the latter; see Blyth (1972). What is often
insufficiently appreciated is that one needs to establish first that there is
a statistically trustworthy statistical association, before attempting to
explain it away as substantively spurious.

The statistical misspecification perspective also calls into question certain
philosophical discussions of Simpson's paradox that focus primarily on the
`numbers' associated with the relevant probabilities/associations as in the
case of example 1. A typical representation of Simpson's paradox in terms of
events $A,B,C$ is:$\vspace*{-0.07in}$%
\begin{equation}%
\begin{array}
[c]{c}%
\begin{tabular}
[c]{l}%
$P(A\mathit{\mid}B)\mathit{<}P(A\mathit{\mid}\lnot B),\text{ but}\medskip$\\
$P(A\mathit{\mid}B,C)\mathit{>}P(A\mathit{\mid}\lnot B,C)$ and$\medskip$\\
$P(A\mathit{\mid}B,\lnot C)\mathit{>}P(A\mathit{\mid}\lnot B,\lnot C),$%
\end{tabular}
\end{array}
\vspace*{-0.07in}\label{ph}%
\end{equation}
where `$\lnot$' denotes the `negation' operator. Malinas and Bigelow (2016)
illustrate (\ref{ph}) using made up numbers that satisfy the above
inequalities, and describe the source of the paradox as follows:

\textsf{\textquotedblleft The applications arise from the close connections
between proportions, percentages, probabilities, and their representations as
fractions.\textquotedblright\ (p. 1)}

\hspace*{-0.25in}They proceed to claim that their artificial illustration
provides a way to explain an empirical example from Cohen and Nagel (1934)
concerning death rates in 1910 from tuberculosis in Richmond, Virginia and New
York city. As argued above, however, in the case of observed data some of the
`numbers' used in such arguments might be statistically untrustworthy,
undermining the soundness of the logical argument in (\ref{ph}). Indeed,
oversimplifications of the form (\ref{ph}), contribute to the perpetuation of
the misconceptions beleaguering the paradox.\vspace*{-0.15in}

\section{Summary and conclusions\vspace*{-0.1in}}

What is often insufficiently appreciated in statistical modeling and inference
is that the inference propositions (optimal estimators, tests, and predictors
and their sampling distributions) depend crucially on the validity of the
probabilistic assumptions one imposes on the data. The totality of these
assumptions comprise the underlying statistical model, which is used to define
the distribution of the sample and the likelihood function. If the statistical
model is misspecified, in the sense that any of its assumptions are invalid
for the particular data, the reliability of inference based on such a model is
usually undermined, giving rise to untrustworthy evidence.

The paper revisited Simpson's paradox using the statistical misspecification
perspective with a view to shed light on several silent features of the
paradox. Using this perspective, it was argued that the key to unraveling the
various counterintuitive results associated with this paradox is to formalize
the vague notion of `spurious' inference results into `statistically
untrustworthy' results which can be evidenced using misspecification testing.
This enables one to distinguish between two different cases of the paradox as
it relates to the reversal of statistical associations. Case 1, where the
reversal is statistically trustworthy because the underlying statistical
models are statistically adequate (example 1). Case 2, where the apparent
reversal is statistically untrustworthy due to statistical misspecification
(examples 2-5). The real issue is whether the inference results pertaining to
statistical associations are statistically trustworthy or not, and the key
criterion to appraise that is statistical adequacy. Hence, the statistical
misspecification perspective puzzles out Simpson's paradox because in both
cases there is nothing counterintuitive to explain.

The statistical misspecification perspective is also used to revisit the
causal dimension of the paradox by distinguishing between statistical and
substantive inadequacy (spuriousness). To ensure the reliability of any
inferences relating to testing whether a third variable constitutes a
confounder, requires that the underlying statistical model is statistically
adequate. This is particularly problematic for the causal resolution of the
paradox when the third variable is related to a relevant ordering of interest
which is revealed after the data are collected. In such cases one needs to
account for any departures from the model assumptions as they relate to the
ordering in question, and replace the generic terms used to capture the
neglected statistical information with substantively meaningful explanatory
variables.$\vspace*{-0.15in}$

\newpage

\section{Appendix: The Linear Regression\ - implicit statistical
parameterizations$\vspace*{-0.1in}$}

The traditional specification of the LR model takes the form:%
\[%
\begin{array}
[c]{c}%
y_{t}\mathit{=}\beta_{0}+\mathbf{\beta}_{1}^{\top}\mathbf{x}_{t}%
\mathit{+}u_{t},\medskip\\
\text{\lbrack i]\ }E(u_{t}\mathit{\mid}\mathbf{X}_{t}\mathit{=}\mathbf{x}%
_{t})\mathit{=}0,\ \text{[ii]\ }E(u_{t}^{2}\mathit{\mid}\mathbf{X}%
_{t}\mathit{=}\mathbf{x}_{t})\mathit{=}\sigma^{2},\medskip\\
\text{\lbrack iii] }E(u_{t}u_{s}\mathit{\mid}\mathbf{X}_{t}\mathit{=}%
\mathbf{x}_{t})\mathit{=}0,\ \text{[iv] }u_{t}\mathit{\backsim}%
\text{\textsf{N}}(.,.),\ t\mathit{\in}\mathbb{N}.
\end{array}
\]

\begin{theorem}
Assumptions [i]-[iii] relating to the first two moments of the conditional
distribution $f(u_{t}\mathit{\mid}\mathbf{x}_{t};\mathbf{\theta}),$ imply that
the model parameters $\mathbf{\theta}$:\textbf{=}$(\beta_{0},\mathbf{\beta
}_{1},\sigma^{2})$ have the following statistical parameterizations in terms
of the primary parameters of the joint distribution $f(y_{t},\mathbf{x}%
_{t};\mathbf{\phi})$, $\mathbf{\phi\mathit{:=}}(E(y_{t}),\ E(\mathbf{X}%
_{t}),\ Cov(\mathbf{X}_{t}),\ Cov(\mathbf{X}_{t},y_{t}))$:
\begin{equation}%
\begin{tabular}
[c]{ll}%
$\beta_{0}\mathbf{=}E(y_{t})\mathit{-}\mathbf{\beta}_{1}^{\top}E(\mathbf{X}%
_{t}),$ & $\mathbf{\beta}_{1}\mathit{=}\mathbf{[}Cov(\mathbf{X}_{t}%
)]^{-1}Cov(\mathbf{X}_{t},y_{t}),\medskip$\\
\multicolumn{2}{l}{$\sigma^{2}\mathbf{=}Var(y_{t})\mathit{-}Cov(\mathbf{X}%
_{t},y_{t})^{\top}[Cov(\mathbf{X}_{t})]^{-1}Cov(\mathbf{X}_{t},y_{t})$}%
\end{tabular}
\label{p}%
\end{equation}

\end{theorem}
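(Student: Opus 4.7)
The plan is to extract each component of $\mathbf{\theta}$ by feeding the statistical GM $y_t = \beta_0 + \mathbf{\beta}_1^\top \mathbf{x}_t + u_t$ into the appropriate moment of the joint distribution of $(y_t, \mathbf{X}_t)$, and then using assumptions [i] and [ii] together with the law of iterated expectations to eliminate $u_t$.

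First, I would derive $\beta_0$. Taking conditional expectations of the GM and invoking [i] gives $E(y_t \mid \mathbf{X}_t = \mathbf{x}_t) = \beta_0 + \mathbf{\beta}_1^\top \mathbf{x}_t$. Applying the law of iterated expectations then yields $E(y_t) = \beta_0 + \mathbf{\beta}_1^\top E(\mathbf{X}_t)$, and solving for $\beta_0$ produces the stated expression. Second, I would obtain $\mathbf{\beta}_1$ by computing $Cov(\mathbf{X}_t, y_t)$ from the GM. Linearity of covariance gives $Cov(\mathbf{X}_t, y_t) = Cov(\mathbf{X}_t)\mathbf{\beta}_1 + Cov(\mathbf{X}_t, u_t)$, and the cross-covariance term vanishes because, by [i], $E(\mathbf{X}_t u_t) = E[\mathbf{X}_t E(u_t \mid \mathbf{X}_t)] = 0$ and similarly $E(u_t) = 0$. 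Assuming $Cov(\mathbf{X}_t)$ is nonsingular, inverting gives the formula $\mathbf{\beta}_1 = [Cov(\mathbf{X}_t)]^{-1} Cov(\mathbf{X}_t, y_t)$.

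Third, for $\sigma^2$ I would compute $Var(y_t)$ from the GM. The decomposition yields $Var(y_t) = \mathbf{\beta}_1^\top Cov(\mathbf{X}_t) \mathbf{\beta}_1 + Var(u_t) + 2 Cov(\mathbf{\beta}_1^\top \mathbf{X}_t, u_t)$. Iterated expectations together with [ii] yield $Var(u_t) = E[E(u_t^2 \mid \mathbf{X}_t)] - (E(u_t))^2 = \sigma^2$, while the cross-covariance again vanishes by [i]. Substituting the expression for $\mathbf{\beta}_1$ obtained above into $\mathbf{\beta}_1^\top Cov(\mathbf{X}_t) \mathbf{\beta}_1$ collapses it to $Cov(\mathbf{X}_t, y_t)^\top [Cov(\mathbf{X}_t)]^{-1} Cov(\mathbf{X}_t, y_t)$, giving the stated formula for $\sigma^2$.

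There is no real obstacle here beyond routine bookkeeping: the derivation is a direct consequence of the definitional GM together with [i]--[ii]. The only point that warrants care is the implicit role of [iii]: it is not needed for the parameterization at a single $t$, but is what guarantees that the conditional second-moment structure of the error process is fully pinned down (so that $\sigma^2$ is a well-defined t-invariant quantity). I would note this subtlety briefly and treat the nonsingularity of $Cov(\mathbf{X}_t)$ as a standing regularity assumption required to invert and solve for $\mathbf{\beta}_1$.
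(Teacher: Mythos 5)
Your proposal is correct and follows essentially the same route as the paper: assumption [i] plus the law of iterated expectations for $\beta_0$, the vanishing of $Cov(\mathbf{X}_{t},u_{t})$ for $\mathbf{\beta}_{1}$, and the identification $Var(u_t)=\sigma^2$ via [ii] for the error variance. The only cosmetic difference is that the paper obtains $\sigma^2$ through the conditional variance decomposition $Var(y_t)=E[Var(y_t\mid\sigma(\mathbf{X}_t))]+Var[E(y_t\mid\sigma(\mathbf{X}_t))]$ rather than expanding $Var(\beta_0+\mathbf{\beta}_1^{\top}\mathbf{X}_t+u_t)$ directly, and these yield identical algebra; your remarks on the nonsingularity of $Cov(\mathbf{X}_t)$ and the limited role of [iii] are sensible additions.
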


\begin{proof}
Assumption [i] implies that:$\vspace*{-0.07in}$%
\begin{equation}%
\begin{array}
[c]{c}%
E(u_{t}\mathit{\mid}\mathbf{X}_{t}\mathit{=}\mathbf{x}_{t})\mathbf{=}%
0\Leftrightarrow E(y_{t}\mathit{\mid}\mathbf{X}_{t}\mathit{=}\mathbf{x}%
_{t})\mathbf{=}\beta_{0}+\mathbf{\beta}_{1}^{\top}\mathbf{x}_{t}.
\end{array}
\label{ce}%
\end{equation}
The law of iterated expectations (Williams, 1991): $%
\begin{array}
[c]{c}%
\left(  E\left[  E(Y\mathit{\mid}\sigma(\mathbf{X}))\right]  \right)
\mathbf{=}E(Y),
\end{array}
$\newline where $\sigma(\mathbf{X})$ denotes the sigma-field generated by
$\mathbf{X},$ implies that:%
\[
E\left[  E(y_{t}\mathit{\mid}\sigma(\mathbf{X}_{t}))\right]  \mathbf{=}%
E(y_{t})\mathbf{=}\beta_{0}\mathit{+}\mathbf{\beta}_{1}^{\top}E(\mathbf{X}%
_{t})\Longrightarrow\beta_{0}\mathit{=}E(y_{t})-\mathbf{\beta}_{1}^{\top
}E(\mathbf{X}_{t})
\]
Substituting $\beta_{0}$ back into $y_{t}\mathit{=}\beta_{0}+\mathbf{\beta
}_{1}^{\top}\mathbf{x}_{t}\mathit{+}u_{t}$ yields:%
\[
\left[  y_{t}-E(y_{t})\right]  \mathit{=}\mathbf{\beta}_{1}^{\top}\left[
\mathbf{X}_{t}-E(\mathbf{X}_{t})\right]  +u_{t}.
\]
Post-multiplying both sides by $\left[  \mathbf{X}_{t}\mathit{-}%
E(\mathbf{X}_{t})\right]  ^{\top}$ and taking expectations yields:%
\[%
\begin{array}
[c]{cl}%
Cov(y_{t},\mathbf{X}_{t}) & \mathit{:=}E\left(  \left[  y_{t}\mathit{-}%
E(y_{t})\right]  \left[  \mathbf{X}_{t}\mathit{-}E(\mathbf{X}_{t})\right]
^{\top}\right)  \mathit{=}\\
& \mathit{=}\mathbf{\beta}_{1}^{\top}\left[  \mathbf{X}_{t}\mathit{-}%
E(\mathbf{X}_{t})\right]  \left[  \mathbf{X}_{t}\mathit{-}E(\mathbf{X}%
_{t})\right]  ^{\top}\mathit{+}E(u_{t}\left[  \mathbf{X}_{t}\mathit{-}%
E(\mathbf{X}_{t})\right]  ^{\top}).
\end{array}
\]
Since, the last term is zero: $%
\begin{array}
[c]{c}%
E(\mathbf{X}_{t}^{\top}u_{t})\mathit{=}E\left[  E(u_{t}\mathit{\mid}%
\sigma(\mathbf{X}_{t}))\right]  \mathit{=}0,
\end{array}
\smallskip\newline$it follows that: $\mathbf{\beta}_{1}\mathit{=}%
\mathbf{[}Cov(\mathbf{X}_{t})]^{-1}Cov(\mathbf{X}_{t},y_{t})$. 

In the case of $\sigma^{2}$ we use a theorem analogous to the \textit{lie} for
the variance (Williams, 1991):\vspace*{-0.1in}%
\[%
\begin{array}
[c]{c}%
Var(y_{t})=E\left[  Var(y_{t}\mathit{\mid}\sigma(\mathbf{X}_{t}))\right]
+Var\left[  E(y_{t}\mathit{\mid}\sigma(\mathbf{X}_{t}))\right]  ,
\end{array}
\]
where, by definition $E\left[  Var(y_{t}\mathit{\mid}\sigma(\mathbf{X}%
_{t}))\right]  \mathbf{=}\sigma^{2}.$ The mean deviation of (\ref{ce})
is:\vspace*{-0.1in}%
\[%
\begin{array}
[c]{c}%
\left[  \beta_{0}\mathit{+}\mathbf{\beta}_{1}^{\top}\mathbf{X}_{t}\right]
-E\left(  \left[  \beta_{0}\mathit{+}\mathbf{\beta}_{1}^{\top}\mathbf{X}%
_{t}\right]  \right)  \mathit{=}\mathbf{\beta}_{1}^{\top}\left[
\mathbf{X}_{t}-E(\mathbf{X}_{t})\right]  ,
\end{array}
\]
and thus, by definition:\vspace*{-0.1in}%
\[%
\begin{array}
[c]{c}%
Var\left[  E(y_{t}\mathit{\mid}\mathbf{X}_{t})\right]  \mathbf{=}E\left[
\mathbf{\beta}_{1}^{\top}\left[  \mathbf{X}_{t}\mathit{-}E(\mathbf{X}%
_{t})\right]  \left[  \mathbf{X}_{t}\mathit{-}E(\mathbf{X}_{t})\right]
^{\top}\mathbf{\beta}_{1}\right]  \mathit{=}\mathbf{\beta}_{1}^{\top}\left[
Cov(\mathbf{X}_{t})\right]  \mathbf{\beta}_{1}.
\end{array}
\vspace*{-0.1in}%
\]
From this, it follows that:\vspace*{-0.1in}
\[%
\begin{array}
[c]{c}%
Var(y_{t})\mathbf{=}\sigma^{2}\mathit{+}\mathbf{\beta}_{1}^{\top}\left[
Cov(\mathbf{X}_{t})\right]  \mathbf{\beta}_{1}\Longrightarrow\sigma
^{2}\mathit{=}Var(y_{t})\mathit{-}\mathbf{\beta}_{1}^{\top}\left[
Cov(\mathbf{X}_{t})\right]  \mathbf{\beta}_{1},
\end{array}
\vspace*{-0.1in}%
\]
which yields the parameterization in (\ref{p}).
\end{proof}

\end{document}